\crefname{section}{\S\!}{\S\S\!}
\declaretheorem[name=Theorem,numberwithin=section]{theorem}
\declaretheorem[name=Corollary,numberlike=theorem]{corollary}
\declaretheorem[name=Lemma,numberlike=theorem]{lemma}
\declaretheorem[name=Proposition,numberlike=theorem]{proposition}
\declaretheorem[name=Definition,style=definition,numberlike=theorem,qed={$\lrcorner$}]{definition}
\declaretheorem[name=Example,style=definition,numberlike=theorem,qed={$\lrcorner$}]{example}
\declaretheorem[name=Remark,style=definition,numberlike=theorem,qed={$\lrcorner$}]{remark}
\renewcommand{\leq}{\leqslant}
\renewcommand{\geq}{\geqslant}
\newcommand{\yon}{\mathcal{y}}
\newcommand{\op}{^\textnormal{op}}
\newcommand{\id}{\textnormal{id}}
\newcommand{\cat}[1]{\mathtt{#1}}
\newcommand{\rig}[1]{\mathsf{#1}}
\newcommand{\FSet}{\cat{FSet}}
\newcommand{\Dir}{\rig{Dir}}
\newcommand{\cDir}{\cat{Dir}}
\newcommand{\Rect}{\rig{Rect}}
\DeclareMathOperator{\Hom}{Hom}
\DeclareMathOperator{\prob}{\mathbb{P}}
\title{Dirichlet polynomials and entropy}
\author{David I. Spivak\footnote{\texttt{david@topos.institute}} \qquad Timothy Hosgood\footnote{\texttt{tim@topos.institute}}}
\begin{document}

\maketitle

\begin{abstract}
  A Dirichlet polynomial $d$ in one variable $\yon$ is a function of the form $d(\yon)=a_n n^\yon+\cdots+a_22^\yon+a_11^\yon+a_00^\yon$ for some $n,a_0,\ldots,a_n\in\mathbb{N}$.
  We will show how to think of a Dirichlet polynomial as a set-theoretic bundle, and thus as an empirical distribution.
  We can then consider the Shannon entropy $H(d)$ of the corresponding probability distribution, and we define its \emph{length} (or, classically, its \emph{perplexity}) by $L(d)=2^{H(d)}$.
  On the other hand, we will define a rig homomorphism $h\colon\Dir\to\Rect$ from the rig of Dirichlet polynomials to the so-called \emph{rectangle rig}, whose underlying set is $\mathbb{R}_{\geq0}\times\mathbb{R}_{\geq0}$ and whose additive structure involves the weighted geometric mean;
  we write $h(d)=(A(d),W(d))$, and call the two components \emph{area} and \emph{width} (respectively).

  The main result of this paper is the following: the rectangle-area formula $A(d)=L(d)W(d)$ holds for any Dirichlet polynomial $d$.
  In other words, the entropy of an empirical distribution can be calculated entirely in terms of the homomorphism $h$ applied to its corresponding Dirichlet polynomial.
  We also show that similar results hold for the cross entropy.
\end{abstract}


\section{Introduction}
\label{section:introduction}

The purpose of this note is simply to provide another categorical treatment of \emph{entropy} of probability distributions, which turns out to be computed in terms of a rig homomorphism;
our treatment also generalises to \emph{cross entropy} (and thus to \emph{Kullback--Leibler divergence}).
What is particularly interesting about the treatment outlined here is that we can somewhat ``visualise'' the notion of entropy in terms of sizes of coding schemes (cf. \cref{section:understanding-the-numbers}).
Not only that, but classical entropy is only homomorphic in the product of distributions, whereas the notion that we describe here is homomorphic in both the product and the sum.

A brief outline of this paper is as follows:
\begin{itemize}
  \item[\cref{section:dir-and-bun}:]
    We recall the definitions of \emph{Dirichlet polynomials}\footnote{One important thing to note is the following: Dirichlet polynomials are well studied objects in the setting of complex analysis, but we \emph{cannot} apply tools from this area to our setting, because we have only \emph{natural number} coefficients and \emph{non-negative} exponents.} and \emph{set-theoretic bundles}, along with their rig structures, from \cite{SM2020};
    we then study the equivalence between these two notions.
  \item[\cref{section:bun-and-emp}:]
    We explain how \emph{empirical probability distributions} correspond to set-theoretic bundles (and thus to Dirichlet polynomials).
  \item[\cref{section:area-and-width}:]
    We define the \emph{rig homomorphism} $h\colon\Dir\to\Rect$ that we wish to study, whose codomain is a rig encoding the weighted geometric mean;
    we prove some useful computational results and give some explicit examples.
  \item[\cref{section:length}:]
    We define the \emph{entropy} $H(d)$ of a Dirichlet polynomial using the classical notion of Shannon entropy;
    we give some explicit examples;
    we prove the main result of this paper (\cref{theorem:rectangle-area-formula}), relating entropy to the rig homomorphism defined in the previous section.
  \item[\cref{section:understanding-the-numbers}:]
    We try to provide some intuition for the image $h(d)$ of a Dirichlet polynomial under the rig homomorphism, in terms of \emph{coding schemes}.
  \item[\cref{section:cross-entropy}:]
    We generalise \cref{theorem:rectangle-area-formula} to the case of \emph{cross-entropy}, or \emph{Kullback--Leibler divergence}.
\end{itemize}

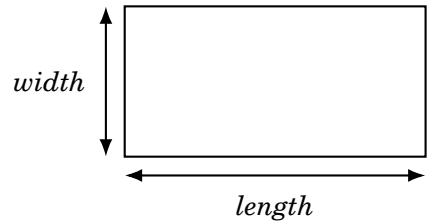
\begin{figure}[h!]
\centering
  \begin{tikzpicture}
    \draw[thick] (0,0) rectangle ++(4,2);
    \draw[thick,Latex-Latex] (0,-0.25) to node[label={below:{\emph{length}}}]{} (4,-0.25);
    \draw[thick,Latex-Latex] (-0.25,0) to node[label={left:{\emph{width}}}]{} (-0.25,2);
    \draw[thick,Latex-Latex,white] (4.25,0) to node[label={right:{\emph{width}}}]{} (4.25,2);
  \end{tikzpicture}
  \caption{Our convention for naming the sides of a rectangle, from \cite{KS2020}.}
\end{figure}

\subsection*{Acknowledgements}

The first author acknowledges support from AFOSR grant no.~{FA9550-20-1-0348}.
The authors thank Marco Perin for useful conversations.

\section{Dirichlet polynomials and bundles}
\label{section:dir-and-bun}

\emph{This section is simply a brief summary of content from \cite{SM2020}, repeated here for the convenience of the reader.}

\begin{definition}
  A \emph{Dirichlet polynomial} $d$ in one variable $\yon$ is a function of the form $d(\yon)=a_n n^\yon + \ldots + a_2 2^\yon + a_1 1^\yon + a_0 0^\yon$ for some $n,a_0,\ldots,a_n\in\mathbb{N}$.

  The set of Dirichlet polynomials is clearly closed under addition, and further under multiplication (using the distributive law along with the fact that $m^\yon\cdot n^\yon=(m\cdot n)^\yon$).
  In fact, it has the structure of a \emph{rig}: a ``ring without negatives'' (or, to be pedantic, a monoid object in commutative monoids).
  We denote this rig by $\Dir$, where the additive unit is $0$, and the multiplicative unit is $1^\yon$.
  
  Note that we can embed $\mathbb{N}$ as a sub-rig of $\Dir$, by $a\mapsto a\cdot1^\yon$;
  we often use this fact and simply write $a\in\Dir$.
\end{definition}

Following \cite{SM2020}, we can think of Dirichlet polynomials as functors $\FSet\op\to\FSet$, where $\FSet$ is the category of finite sets.
Indeed, given a natural number $n\in\mathbb{N}$, the \emph{exponential} $n^\yon$ can be thought of as the Yoneda embedding of the set with $n$ elements\footnote{For typographical convenience, we sometimes use the notation $n$ and $\underline{n}$ interchangeably. In particular, we write e.g. $d(0)$ instead of $d(\underline{0})$.}, i.e.
\[
  n^\yon = \FSet\,(-,\underline{n})
\]
where $\underline{n}=\{1,\ldots,n\}$.
Then addition of exponentials corresponds to the coproduct of the corresponding representable functors (and so multiplication by a natural number $a_n$ corresponds to the $a_n$-fold coproduct of the representable functor with itself).
This means that evaluating a Dirichlet polynomial at some natural number $n$ corresponds to evaluating the corresponding functor on the finite set $\underline{n}$.

Note that $0^\yon$ is \emph{not} the initial object $0$, since
\[
  0^{\underline{n}} =
  \begin{cases}
    1 &\mbox{if $n=0$;}
  \\0 &\mbox{if $n\geq1$}
  \end{cases}
\]
i.e. $0^\yon\neq0$.

\begin{example}
  The Dirichlet polynomial
  \[
    d(\yon) = 4^\yon + 4\cdot1^\yon
  \]
  evaluated at $0$ gives
  \[
    \begin{aligned}
      d(0)
      &= \FSet\,(\underline{0},\underline{4}) \sqcup \left( \sqcup_{i=1}^4 \FSet\,(\underline{0},\underline{1}) \right)
    \\&\cong \underline{1} \sqcup \underline{4}
    \\&\cong \underline{5},
    \end{aligned}
  \]
  and, similarly,
  \[
    \begin{aligned}
      d(1)
      &= \FSet\,(\underline{1},\underline{4}) \sqcup \left( \sqcup_{i=1}^4 \FSet\,(\underline{1},\underline{1}) \right)
    \\&\cong \underline{4} \sqcup \underline{4}
    \\&\cong \underline{8}.
    \end{aligned}
  \]
  Note that, since $1^\yon=1$, we can write $d(\yon)=4^\yon+4$.
\end{example}

\begin{definition}
  A \emph{morphism} $\varphi\colon d\to e$ of Dirichlet polynomials is a natural transformation of (contravariant) functors.
  Denote by $\cDir$ the category of Dirichlet polynomials (thought of as functors $\FSet\op\to\FSet$), and by $\cDir\,(d,e)$ the set of all morphisms $d\to e$.
\end{definition}

When we think of Dirichlet polynomials as functors $\FSet\op\to\FSet$, addition is given by the coproduct (disjoint union of sets), and multiplication by the product (cartesian product of sets).
This means that working with Dirichlet polynomials in $\cDir$ really is like working with polynomials, in the sense that addition and multiplication are exactly ``as expected''.

\begin{example}
  The only slightly confusing aspect of multiplication in $\Dir$ is how $0^\yon$ behaves (since $0^\yon\neq0$):
  if $d(\yon)$ is a Dirichlet polynomial, then
  \[
    d(\yon)\cdot0^\yon = d(0)\cdot0^\yon,
  \]
  as follows from the aforementioned fact that $0^{\underline{n}}$ is zero for $n\neq0$, and $1$ for $n=0$.

  We can use this general fact for specific computations.
  For example, let
  \[
    \begin{aligned}
      d(\yon)&\coloneqq 3\cdot2^\yon+1^\yon
    \\e(\yon)&\coloneqq 4^\yon+2^\yon+3\cdot0^\yon.
    \end{aligned}
  \]
  Then
  \[
    \begin{aligned}
      (d\cdot e)(\yon)
      &= (3\cdot8^\yon + 3\cdot4^\yon + 9\cdot0^\yon) + (4^\yon + 2^\yon + 3\cdot0^\yon)
    \\&= 3\cdot8^\yon + 4\cdot4^\yon + 2^\yon + 12\cdot0^\yon.
    \end{aligned}
  \]
  (and $d+e = 4^\yon + 4\cdot2^\yon + 1^\yon + 3\cdot0^\yon$).
\end{example}

There is a more geometric interpretation\footnote{This vague statement can be upgraded to an equivalence of categories (cf. \cref{lemma:morphism-of-dir-is-morphism-of-bund}, or \cite[Theorem~4.6]{SM2020} for more details).} of objects of $\Dir$ as \emph{set-theoretic bundles}, i.e. (isomorphism classes of) functions $E\to B$, where $E,B\in\FSet$, given as follows:
to the Dirichlet polynomial $d\colon\FSet\op\to\FSet$, we associate the function $\pi_d=d(!)\colon d(1)\to d(0)$ induced by the unique function $!\colon\underline{0}\to\underline{1}$.

For example, to the polynomial $d(\yon)=4^\yon+4\cdot1^\yon$, we associate the bundle
\[
  \begin{tikzpicture}[scale=0.5]
    \node at (-2,2.5) {$d(1)\cong\underline{8}\cong$};
    \begin{scope}
      \draw[rounded corners] (0,0) rectangle ++(6,5);
      \foreach \y in {1,2,3,4}
        \draw[thick,black,fill=gray] (1,\y) circle (2mm);
      \foreach \x in {2,3,4,5}
        \draw[thick,black,fill=gray] (\x,1) circle (2mm);
    \end{scope}
    \draw[thick,-Latex] (3,-0.5) to (3,-1.5);
    \node at (3.75,-1) {$\pi_d$};
    \node at (-2,-3) {$d(0)\cong\underline{5}\cong$};
    \begin{scope}[shift={(0,-4)}]
      \draw[rounded corners] (0,0) rectangle ++(6,2);
      \foreach \x in {1,2,3,4,5}
        \draw[thick,black,fill=white] (\x,1) circle (2mm);
    \end{scope}
  \end{tikzpicture}
\]

Note that bundles also form a rig, where the sum is given by disjoint union of sets, and the product is given by the cartesian product of sets, both on the base and the total space.
Further, the equivalence between Dirichlet polynomials and bundles respects the rig structures.\footnote{cf. \cite[Theorem~4.6]{SM2020}.}
Because of this, we often switch freely between thinking of Dirichlet polynomials as functors $d\colon\FSet\op\to\FSet$, as bundles $\pi_d\colon d(1)\to d(0)$, and simply as functions of the form $\sum_{j=0}^n a_n\cdot n^\yon$.

\begin{example}
  We can draw the bundle corresponding to $(2^\yon+1)\cdot(2^\yon+1)$ as follows:
  \[
    \begin{tikzpicture}[scale=0.5]
      \begin{scope}
        \begin{scope}
          \draw[rounded corners] (0,0) rectangle ++(3,3);
          \draw[thick,black,fill=gray] (1,1) circle (2mm);
          \draw[thick,black,fill=gray] (1,2) circle (2mm);
          \draw[thick,black,fill=gray] (2,1) circle (2mm);
        \end{scope}
        \draw[thick,-Latex] (1.5,-0.5) to (1.5,-1.5);
        \begin{scope}[shift={(0,-4)}]
          \draw[rounded corners] (0,0) rectangle ++(3,2);
          \draw[thick,black,fill=white] (1,1) circle (2mm);
          \draw[thick,black,fill=white] (2,1) circle (2mm);
        \end{scope}
        \node at (1.5,-5) {$2^\yon+1$};
      \end{scope}
      \node at (4.5,-1) {$\times$};
      \begin{scope}[shift={(6,0)}]
        \begin{scope}
          \draw[rounded corners] (0,0) rectangle ++(3,3);
          \draw[thick,black,fill=gray] (1,1) circle (2mm);
          \draw[thick,black,fill=gray] (1,2) circle (2mm);
          \draw[thick,black,fill=gray] (2,1) circle (2mm);
        \end{scope}
        \draw[thick,-Latex] (1.5,-0.5) to (1.5,-1.5);
        \begin{scope}[shift={(0,-4)}]
          \draw[rounded corners] (0,0) rectangle ++(3,2);
          \draw[thick,black,fill=white] (1,1) circle (2mm);
          \draw[thick,black,fill=white] (2,1) circle (2mm);
        \end{scope}
        \node at (1.5,-5) {$2^\yon+1$};
      \end{scope}
      \node at (10.5,-1) {$=$};
      \begin{scope}[shift={(12,0)}]
        \begin{scope}
          \draw[rounded corners] (0,0) rectangle ++(5,5);
          \foreach \y in {1,2,3,4}
            \draw[thick,black,fill=gray] (1,\y) circle (2mm);
          \foreach \y in {1,2}
            \draw[thick,black,fill=gray] (2,\y) circle (2mm);
          \foreach \y in {1,2}
            \draw[thick,black,fill=gray] (3,\y) circle (2mm);
          \draw[thick,black,fill=gray] (4,1) circle (2mm);
        \end{scope}
        \draw[thick,-Latex] (2.5,-0.5) to (2.5,-1.5);
        \begin{scope}[shift={(0,-4)}]
          \draw[rounded corners] (0,0) rectangle ++(5,2);
          \foreach \x in {1,2,3,4}
          \draw[thick,black,fill=white] (\x,1) circle (2mm);
        \end{scope}
        \node at (2.5,-5) {$4^\yon+2\cdot2^\yon+1$};
      \end{scope}
    \end{tikzpicture}
  \]
\end{example}

\begin{lemma}
\label{lemma:d0-and-d1-from-sum-form}
  Let $d(\yon)\coloneqq\sum_{j=0}^n a_n\cdot n^\yon$ be a Dirichlet polynomial.
  Then
  \[
    \begin{aligned}
      |d(0)| &= \sum\nolimits_{j=0}^n a_j
    \\|d(1)| &= \sum\nolimits_{j=0}^n a_j j.
    \end{aligned}
  \]
\end{lemma}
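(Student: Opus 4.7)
The plan is to simply unfold the definitions. Recall that the exponential $j^\yon$ is defined as the representable functor $\FSet(-,\underline{j})$, so for any natural number $k$ we have $j^{\underline k}=\FSet(\underline k,\underline j)$, whose cardinality is $j^k$ (in the usual arithmetic sense). In particular, the two special cases we need are $|j^{\underline 0}|=1$ for every $j\geq 0$ (including $j=0$, which is exactly why $0^\yon\neq 0$), and $|j^{\underline 1}|=j$.

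Next, I would invoke the fact (already recalled in the excerpt) that addition of Dirichlet polynomials corresponds to disjoint union of the associated functors, and in particular that multiplication by a natural number $a_j$ is the $a_j$-fold coproduct. Consequently, evaluation of a Dirichlet polynomial at any fixed finite set commutes with coproducts, so that
\[
  |d(\underline k)| \;=\; \Bigl|\bigsqcup_{j=0}^n \bigsqcup_{i=1}^{a_j} \FSet(\underline k,\underline j)\Bigr| \;=\; \sum_{j=0}^n a_j\cdot j^k.
\]
Setting $k=0$ and $k=1$ and substituting the two cardinalities from the previous paragraph gives $|d(0)|=\sum_j a_j$ and $|d(1)|=\sum_j a_j\cdot j$, as claimed.

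There is no real obstacle: the only subtlety worth flagging explicitly is the $j=0$ term, since $0^\yon$ is not the zero polynomial, and one must check that it contributes $a_0$ to $|d(0)|$ and $0$ to $|d(1)|$, which is precisely what the two cardinality computations give.
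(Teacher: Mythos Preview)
Your proof is correct and is essentially the same as the paper's, which simply records that $n^{\underline{0}}=1$ and $n^{\underline{1}}=n$ for all $n\in\mathbb{N}$; you have merely spelled out the intermediate step that evaluation commutes with coproducts and flagged the $j=0$ case explicitly.
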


\begin{proof}
  This follows from the fact that $n^{\underline{0}}=1$ and $n^{\underline{1}}=n$, for all $n\in\mathbb{N}$.
\end{proof}

\begin{definition}
\label{definition:fibre-notation}
  Let $d\in\Dir$.
  For $i\in d(0)$, we define
  \[
    d[i] \coloneqq \pi_d^{-1}(i)
  \]
  where $\pi_d\colon d(1)\to d(0)$ is the bundle corresponding to $d$.
\end{definition}

Using the fact that the sum of bundles is given by the disjoint union of sets, we can use this above definition to write any Dirichlet polynomial $d$ as
\[
  d(\yon) \cong \sum_{i\in d(0)} d[i]^\yon
\]
(where $\sum$ is the coproduct in $\FSet$).

\begin{corollary}
\label{corollary:d0-and-d1-from-fibre-form}
  Let $d(\yon)\coloneqq\sum_{i\in d(0)} d[i]^\yon$ be a Dirichlet polynomial.
  Then
  \[
    |d(1)| = \sum\nolimits_{i\in d(0)} |d[i]|.
  \]
\end{corollary}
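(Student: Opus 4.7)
The plan is to reduce this directly to \cref{lemma:d0-and-d1-from-sum-form} via a re-indexing, or equivalently to deduce it from the bundle picture $\pi_d\colon d(1)\to d(0)$. Both approaches express the same underlying fact: the total space of a set-theoretic bundle is the disjoint union of its fibres.

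First I would recall that, by \cref{definition:fibre-notation}, the set $d[i]$ is exactly the fibre $\pi_d\inv(i)$. Since $\pi_d$ is a function from $d(1)$ to $d(0)$, we have the canonical decomposition
\[
  d(1) = \bigsqcup_{i\in d(0)} \pi_d\inv(i) = \bigsqcup_{i\in d(0)} d[i],
\]
and taking cardinalities of this disjoint union of finite sets immediately yields the desired equation.

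Alternatively, one can bypass the bundle picture entirely and argue purely from \cref{lemma:d0-and-d1-from-sum-form}. Given the expression $d(\yon) \cong \sum_{i\in d(0)} d[i]^\yon$, set $a_j \coloneqq |\{i\in d(0) : |d[i]| = j\}|$, so that $d(\yon) = \sum_{j\geq0} a_j \cdot j^\yon$ in the usual form. Then \cref{lemma:d0-and-d1-from-sum-form} gives $|d(1)| = \sum_j a_j j$, and regrouping the sum according to the defining partition of $d(0)$ turns this back into $\sum_{i\in d(0)} |d[i]|$.

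There is no real obstacle here: the corollary is essentially an unpacking of notation, asserting that counting the total space of a bundle fibre by fibre agrees with counting it directly. I would present the one-line bundle proof, as it makes the geometric content transparent and avoids re-indexing.
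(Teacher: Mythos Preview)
Your proposal is correct. The paper's own proof is even terser than either of your two routes: it simply says ``this is the fact that $n^{\underline{1}}=n$'', i.e.\ one evaluates the given expression $d(\yon)\cong\sum_{i\in d(0)} d[i]^\yon$ at $\yon=1$ and reads off the cardinality of the resulting coproduct. Your first (bundle) argument is exactly this, just phrased in terms of the decomposition of $d(1)$ into fibres rather than as polynomial evaluation; your second argument via re-indexing to \cref{lemma:d0-and-d1-from-sum-form} is also fine but unnecessarily indirect, since that lemma is itself just the observation $n^{\underline{1}}=n$ applied to the other presentation of $d$.
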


\begin{proof}
  This is, again, simply the fact that $n^{\underline{1}}=n$ for all $n\in\mathbb{N}$.
\end{proof}

\begin{lemma}
\label{lemma:morphism-of-dir-is-morphism-of-bund}
  A morphism $\varphi\colon d\to e$ of Dirichlet polynomials is exactly a morphism of the corresponding bundles, i.e. functions $\varphi_0\colon d(0)\to e(0)$ and $\varphi_1\colon d(1)\to e(1)$ such that
  \[
    \begin{tikzcd}
      d(1) \rar["\varphi_1"] \dar[swap,"\pi_d"]
      & e(1) \dar["\pi_e"]
    \\d(0) \rar[swap,"\varphi_0"]
      & e(0)
    \end{tikzcd}
  \]
  commutes.
\end{lemma}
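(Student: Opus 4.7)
The forward direction is immediate: given a natural transformation $\varphi\colon d\Rightarrow e$, one sets $\varphi_0\coloneqq\varphi_{\underline{0}}$ and $\varphi_1\coloneqq\varphi_{\underline{1}}$, and the naturality square of $\varphi$ at the unique morphism $!\colon\underline{0}\to\underline{1}$ of $\FSet$ is precisely the claimed commuting square, since $\pi_d=d(!)$ and $\pi_e=e(!)$ by definition.

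For the converse, my plan is to reconstruct a full natural transformation from the data of a commuting square by invoking the Yoneda lemma. The key observation is that any Dirichlet polynomial decomposes as a coproduct of representables, $d\cong\sum_{i\in d(0)} d[i]^\yon$, where $d[i]^\yon=\FSet(-,\underline{d[i]})$. Since natural transformations out of a coproduct are tuples of natural transformations out of each summand, this gives
\[
  \cDir(d,e) \;\cong\; \prod_{i\in d(0)} \cDir(d[i]^\yon,e) \;\cong\; \prod_{i\in d(0)} e(\underline{d[i]}),
\]
where the second isomorphism is Yoneda. Decomposing $e$ in the same fashion, an element of $e(\underline{d[i]})$ is precisely a pair $(j,f)$ consisting of some $j\in e(0)$ together with a function $f\colon d[i]\to e[j]$.

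The final step is to recognise that a family $\bigl((j_i,f_i)\bigr)_{i\in d(0)}$ of such pairs is exactly the data of a commuting square as in the statement: the assignment $i\mapsto j_i$ defines $\varphi_0\colon d(0)\to e(0)$, and, using the fibrewise disjoint-union decomposition $d(1)=\sqcup_{i\in d(0)} d[i]$ from \cref{corollary:d0-and-d1-from-fibre-form} (and similarly for $e$), the functions $f_i\colon d[i]\to e[j_i]$ glue into a single $\varphi_1\colon d(1)\to e(1)$ whose restriction to each fibre $d[i]$ lands in $e[\varphi_0(i)]$ --- which is exactly the condition that the square commutes. I do not anticipate any substantive obstacle here; the argument reduces to Yoneda's lemma plus careful bookkeeping of fibres, and the only thing left to verify is that the two resulting assignments (natural transformation to commuting square, and vice versa) are mutually inverse, which is a direct unwinding of the isomorphisms above.
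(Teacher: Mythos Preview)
Your argument is correct and follows essentially the same route as the paper: decompose $d$ and $e$ as coproducts of representables, apply the universal property of the coproduct and the Yoneda lemma to identify $\cDir(d,e)$ with $\prod_{i\in d(0)}\sum_{j\in e(0)}\Hom_{\FSet}(d[i],e[j])$, and then read off the bundle morphism from this data. Your explicit treatment of the forward direction via the naturality square at $!\colon\underline{0}\to\underline{1}$ is a nice addition that the paper leaves implicit.
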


\begin{proof}
  This statement forms a specific part of \cite[Theorem~4.6]{SM2020}, but the proof is simple enough that we give a direct version here.
  Writing $d(\yon)\coloneqq\sum_{i\in d(0)}d[i]^\yon$ and $e(\yon)\coloneqq\sum_{i\in e(0)}e[i]^\yon$, we see that
  \[
    \begin{aligned}
      \Hom_{[\FSet\op,\FSet]}(d,e)
      &\cong \prod_{i\in d(0)}\Hom_{[\FSet\op,\FSet]}\left(
        d[i]^\yon,
        \sum_{j\in e(0)}e[j]^\yon
      \right)
    \\&\cong \prod_{i\in d(0)}\sum_{j\in e(0)} e[j]^{d[i]}
    \\&= \prod_{i\in d(0)}\sum_{j\in e(0)} \Hom_{\FSet}(d[i],e[j])
    \end{aligned}
  \]
  (the first isomorphism is by the universal property of the coproduct;
  the second isomorphisms is the Yoneda lemma).
  But an element of this set is exactly a bundle morphism:
  we have, for all $i\in d(0)$, some $j\in e(0)$ along with a function $d[i]\to e[j]$;
  $\varphi_1$ is given by the disjoint union of all these $d[i]\to e[j]$, and $\varphi_0$ is given by the the choice of $j$ for each $i$.
\end{proof}

\begin{definition}
  Given Dirichlet polynomials $d,e\in\cDir$ such that $d(0)=e(0)$, we denote by $\cDir_{/d(0)}(d,e)$ the set of morphisms $(\varphi_0,\varphi_1)\colon d\to e$ such that $\varphi_0=\id$.
\end{definition}

Given the correspondence between Dirichlet polynomials and bundles, we might rightly ask why we should prefer to work with the former over the latter.
For one possible answer to this, see \cref{remark:why-dir-instead-of-bun}.

\section{Bundles as empirical distributions}
\label{section:bun-and-emp}

The interpretation of Dirichlet polynomials as bundles helps us to understand how they relate to probability theory.
Imagine flipping a coin eight times and observing five heads and three tails;
we refer to ``heads'' and ``tails'' as \emph{outcomes}, and each of the eight flips as \emph{draws};
every draw has an associated outcome.

Consider some bundle $\pi_d\colon d(1)\to d(0)$.
We can think of $d(0)$ as the set of outcomes, and $d(1)$ as the set of draws;
the fibre $\pi_d^{-1}(x)$ over an outcome $x\in d(0)$ corresponds to all the draws that lead to the outcome $x$, and so we obtain a probability distribution on $d(0)$ by setting $\prob(X=x)=\frac{|\pi_d^{-1}(x)|}{|d(0)|}$.
Conversely, any \emph{rational distribution} (i.e. a distribution such that all probabilities are rational numbers ) on a finite set arises in this way:
take the finite set as the set of outcomes;
take the least common multiple of the denominators of all the probabilities as the cardinality of $d(1)$;
and then take $\prob(X=x)\cdot |d(1)|$ many elements of $d(1)$ to be in the fibre of $x\in d(0)$.

\begin{example}
  Consider the set $S=\{x_1,x_2,x_3,x_4\}$, endowed with the probability distribution such that
  \[
    \begin{aligned}
      \prob(x_1) &= \frac{1}{5}
      \qquad\prob(x_2) = \frac{1}{6}
    \\\prob(x_3) &= \frac{1}{2}
      \qquad\prob(x_4) = \frac{2}{15}
    \end{aligned}
  \]

  Define the sets $d(0)=\underline{4}$ and $d(1)=\underline{30}$, and define the function $\pi_d\colon d(1)\to d(0)$ by
  \[
    \pi(n) =
    \begin{cases}
      1 &\mbox{if $0\leq n<6$;}
    \\2 &\mbox{if $6\leq n<11$;}
    \\3 &\mbox{if $11\leq n<26$;}
    \\4 &\mbox{if $26\leq n\leq30$.}
    \end{cases}
  \]
  Then the empirical probability distribution on the bundle $\pi\colon d(1)\to d(0)$ agrees exactly with the given distribution on $S$.
  As a Dirichlet polynomial, this bundle is given (up to relabelling the outcomes) by $d(\yon)\coloneqq15^\yon+6^\yon+5^\yon+4^\yon$.

  Note that any multiple $m^\yon\cdot d(\yon)$ of $d$ (for $m\geq1$) will correspond to the same probability distribution as $d$ itself, but to a different empirical distribution, since it will have $m$ times as many draws.
\end{example}

Under this interpretation of Dirichlet polynomials as empirical distributions, multiplication $d\cdot e$ corresponds to taking the \emph{product distribution}.

\begin{remark}
  For any $d\in\Dir$, and any $n\in\mathbb{N}$, we can give $|d(n)|$ a combinatorial interpretation: it is the number of ways of choosing $n$ indistinguishable (in the sense that they have the same outcome) draws, i.e. the number of length-$n$ lists of elements of $d[i]$ for some $i\in d(0)$.

  To see this, note that $d(n)=\cDir\,(n^\yon,d)$ (by Yoneda), and so $d(n)$ is in bijection with the set of bundle morphisms $(\varphi_0,\varphi_1)\colon(n\to1)\to(d(1)\to d(0))$, which are given exactly by choosing $n$ (possibly repeated) elements of $d(1)$ that all lie in the same fibre (namely the fibre above the point specified by $\varphi_0(1)$).
\end{remark}

\begin{remark}
  Although we deal only with \emph{finite} sets and \emph{rational} probability distributions here, it seems likely that one could follow the methods of \cite{FP2019} and consider colimits of these to obtain analogous results for \emph{arbitrary} probability distributions on \emph{discrete measurable spaces}.
\end{remark}

\section{Area and width}
\label{section:area-and-width}

\begin{definition}
\label{definition:rect}
  Define the rig $\Rect$ as follows.
  The underlying set is $\mathbb{R}_{\geq0}\times\mathbb{R}_{\geq0}$.
  The multiplicative structure has unit $(1,1)$, and is given by component-wise multiplication:
  \[
    (A_1,W_1)\cdot(A_2,W_2)\coloneqq(A_1A_2,W_1W_2).
  \]
  The additive structure has unit $(0,0)$, and is given by real-number addition in the first component, and by weighted geometric mean in the second component:
  \[
    (A_1,W_1) + (A_2,W_2)
    \coloneqq \left(
      A_1 + A_2,
      \big( W_1^{A_1} W_2^{A_2} \big)^{\frac{1}{A_1+A_2}}
    \right).
  \]
  Given an element $(A,W)$ in $\Rect$, we call $A$ its \emph{area} and $W$ its \emph{width}.
\end{definition}

The fact that $\Rect$ is indeed a rig follows from the fact that its multiplication distributes over its addition:
\[
  \begin{aligned}
    (A_1,W_1) \cdot \Big((A_2,W_2)+(A_3,W_3)\Big)
    &= (A_1,W_1) \cdot \left(
        A_2+A_3, \big(W_2^{A_2}W_3^{A_3}\big)^{\frac{1}{A_2+A_3}}
      \right)
  \\&= \left(
        A_1(A_2+A_3), W_1\big(W_2^{A_2}W_3^{A_3}\big)^{\frac{1}{A_2+A_3}}
      \right)
  \\&= \left(
        A_1A_2+A_1A_3, \big(W_1^{A_2+A_3}W_2^{A_2}W_3^{A_3}\big)^{\frac{1}{A_2+A_3}}
      \right)
  \\&= \left(
        A_1A_2+A_1A_3, \big((W_1W_2)^{A_2}(W_1W_3)^{A_3}\big)^{\frac{1}{A_2+A_3}}
      \right)
  \\&= \left(
        A_1A_2+A_1A_3, \big((W_1W_2)^{A_1A_2}(W_1W_3)^{A_1A_3}\big)^{\frac{1}{A_1A_2+A_1A_3}}
      \right)
  \\&= (A_1A_2,W_1W_2) + (A_1A_3,W_1W_3)
  \\&= (A_1,W_1)\cdot(A_2,W_2) + (A_1,W_1)\cdot(A_3,W_3).
  \end{aligned}
\]

\begin{proposition}
\label{proposition:unique-rig-morphism}
  There exists a unique rig morphism $h\colon\Dir\to\Rect$ for which
  \[
    h\colon n^\yon \mapsto (n,n).
  \]
\end{proposition}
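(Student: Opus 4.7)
The plan is to handle uniqueness and existence separately. Uniqueness is essentially formal: since any rig homomorphism is in particular additive, and since every Dirichlet polynomial admits a unique normal form $d=\sum_{j\geq0}a_j\cdot j^\yon$ with $a_j\in\mathbb{N}$ (so that $\Dir$ is freely generated as a commutative additive monoid by the set $\{j^\yon:j\in\mathbb{N}\}$), the value $h(d)$ is fully determined by the prescribed values $h(j^\yon)=(j,j)$.

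For existence, I would define $h$ directly on the normal form by the formula
\[
  h\left(\sum_{j\geq0}a_j\cdot j^\yon\right) \coloneqq \sum_{j\geq0}a_j\cdot(j,j),
\]
where on the right the sum is taken in $\Rect$ and $a_j\cdot(j,j)$ denotes the $a_j$-fold iterated $\Rect$-sum. Well-definedness is automatic from uniqueness of the normal form. Preservation of $0$ (empty sum) and of $1^\yon\mapsto(1,1)$ are immediate, and additivity is built into the definition. The only non-trivial verification is multiplicativity.

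For multiplicativity, I would first check the case of pure powers, where everything is straightforward: since multiplication in $\Rect$ is componentwise,
\[
  h(j^\yon)\cdot h(k^\yon) = (j,j)\cdot(k,k) = (jk,jk) = h\bigl((jk)^\yon\bigr) = h(j^\yon\cdot k^\yon).
\]
Then, writing $d=\sum_i a_i\cdot i^\yon$ and $e=\sum_j b_j\cdot j^\yon$, expanding $d\cdot e=\sum_{i,j}a_ib_j\cdot(ij)^\yon$ and applying $h$ gives $\sum_{i,j}a_ib_j\cdot(ij,ij)$; on the other side, $h(d)\cdot h(e)$ expands into the same double sum by repeated application of the distributivity of $\Rect$ that was verified just before the statement of the proposition, together with the pure-power case. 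The main obstacle here is simply bookkeeping: one must trust that the iterated additive scalings $a_i\cdot(i,i)$ in $\Rect$ interact with the product in $\Rect$ via the distributive law in the expected way, but no closed-form evaluation of $a_i\cdot(i,i)$ is needed — distributivity alone carries the argument.
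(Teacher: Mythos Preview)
Your proof is correct and follows essentially the same approach as the paper's: uniqueness from the fact that $\Dir$ is freely generated as an additive monoid by the exponentials $n^\yon$, and existence by checking that multiplicativity on generators (via $m^\yon\cdot n^\yon=(mn)^\yon$) extends to all of $\Dir$ using distributivity in $\Rect$. The paper's version is simply more terse, compressing your multiplicativity check into the single remark that ``this follows from the fact that $m^\yon\cdot n^\yon=(m\cdot n)^\yon$''.
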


\begin{proof}
  Since every Dirichlet polynomial is just a sum of exponentials, a rig homomorphism is fully determined by its action on exponentials, since it must respect addition.
  So we just need to show that $h$ does indeed extend to a rig homomorphism, but this follows from the fact that $m^\yon\cdot n^\yon=(m\cdot n)^\yon$.
\end{proof}

\begin{definition}
  Given a Dirichlet polynomial $d$, we define its \emph{area} $A(d)$ and its \emph{width} $W(d)$ to be given by the components of $h(d)=(A(d),W(d))$.
\end{definition}

With this definition, along with \cref{proposition:unique-rig-morphism}, we see that
\[
  A(n^\yon) = W(n^\yon) = n.
\]

\begin{lemma}
\label{lemma:h-and-scalar-multiples}
  Let $d\in\Dir$ and $a\in\mathbb{N}$.
  Then
  \begin{enumerate}[i.]
    \item $h(a)=(a,1)$;
    \item $A(a\cdot d)=aA(d)$;
    \item $W(a\cdot d)=W(d)$.
  \end{enumerate}
\end{lemma}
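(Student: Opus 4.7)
The plan is to leverage the fact that $h\colon\Dir\to\Rect$ is a rig homomorphism (established in \cref{proposition:unique-rig-morphism}), so that parts (ii) and (iii) will both fall out of part (i) by multiplicativity together with the componentwise definition of multiplication in $\Rect$. Hence the substantive content is part (i), and everything else is bookkeeping.

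For (i), I would write $a\in\Dir$ as the $a$-fold sum $a = 1^\yon+\cdots+1^\yon$, apply additivity of $h$ together with the defining relation $h(1^\yon)=(1,1)$, and reduce the problem to checking that the $a$-fold sum of $(1,1)$ in $\Rect$ equals $(a,1)$. This is a short induction on $a$: the base case $a=1$ is immediate, and for the inductive step one computes directly from \cref{definition:rect} that
\[
  (k,1)+(1,1) = \left(k+1,\ \big(1^{k}\cdot 1^{1}\big)^{\frac{1}{k+1}}\right) = (k+1,\,1),
\]
so the width coordinate is pinned at $1$ throughout.

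For (ii) and (iii), with (i) in hand, the two equalities are extracted simultaneously from the single computation
\[
  h(a\cdot d) = h(a)\cdot h(d) = (a,1)\cdot(A(d),W(d)) = (a\,A(d),\,W(d)),
\]
using multiplicativity of $h$ in the first step and componentwise multiplication in $\Rect$ in the last; reading off the two coordinates yields $A(a\cdot d)=aA(d)$ and $W(a\cdot d)=W(d)$ at once.

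There is no serious obstacle, but one edge case deserves acknowledgement: at $a=0$, the "$a$-fold sum" of part~(i) degenerates to the empty sum and so should be the additive unit of $\Rect$, which by \cref{definition:rect} is $(0,0)$ rather than $(0,1)$. This is a harmless discrepancy since the width coordinate of an area-zero rectangle plays no role in the multiplicative formula used for (ii) and (iii); the clean statement $h(a)=(a,1)$ is to be read for $a\geq 1$, with $a=0$ falling under the general convention that $h$ preserves the additive unit.
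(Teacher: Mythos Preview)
Your proof is correct. Part~(i) matches the paper's argument: both compute the $a$-fold sum of $(1,1)$ in $\Rect$, with you spelling out the induction where the paper simply asserts $a\cdot(1,1)=(a,1)$.

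For (ii) and (iii), however, you take a genuinely different and cleaner route. The paper argues additively throughout: it first reduces to the case $d=n^\yon$ (using that $h$ respects sums and every Dirichlet polynomial is a sum of exponentials), and then computes the $a$-fold sum $a\cdot(n,n)$ in $\Rect$ directly, working out the weighted geometric mean $\big((n^n)^a\big)^{1/an}=n$ by hand. You instead exploit multiplicativity: since $a\in\Dir$ via the embedding $a\mapsto a\cdot 1^\yon$, the expression $a\cdot d$ is a genuine rig product, so $h(a\cdot d)=h(a)\cdot h(d)=(a,1)\cdot(A(d),W(d))=(aA(d),W(d))$ by componentwise multiplication in $\Rect$. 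This sidesteps both the reduction to exponentials and the explicit geometric-mean calculation; the trade-off is that you invoke the multiplicative half of the rig-homomorphism property, which the paper's argument does not need at this point. Your remark on the $a=0$ edge case is apt and is something the paper glosses over.
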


\begin{proof}
  Recall that addition (and thus scalar multiplication) in $\Rect$ involves the weighted geometric mean in the second component.
  Then
  \[
    \begin{aligned}
      h(a)
      \coloneqq& h(a\cdot1^\yon)
    \\=& a\cdot h(1^\yon)
    \\=& a\cdot(1,1)
    \\=& (a,1)
    \end{aligned}
  \]
  which proves (i).
  For (ii) and (iii), since $h$ is a rig homomorphism (and thus respects addition), it suffices to consider the case where $d$ is an exponential, say $d(\yon)=n^\yon$.
  But then
  \[
    \begin{aligned}
      h(a\cdot d)
      &= a\cdot h(d)
    \\&= a\cdot(n,n)
    \\&= \Big(
        an,
        \big(\underbrace{n^n n^n\ldots n^n}_{\mbox{\scriptsize$a$ times}}\big)^{1/an}
      \Big)
    \\&= \Big(
        an,
        \big(n^{an}\big)^{1/an}
      \Big)
    \\&= (an,n)
    \end{aligned}
  \]
  i.e. $A(a\cdot d)=aA(d)$ and $W(a\cdot d)=W(d)$, as claimed.
\end{proof}

\begin{corollary}
\label{corollary:area-and-width}
  Let $d\in\Dir$.
  Then
  \[
    \begin{aligned}
      A(d) &= |d(1)|
    \\W(d)^{A(d)} &= \left\vert\cDir_{/d(0)}(d,d)\right\vert.
    \end{aligned}
  \]
\end{corollary}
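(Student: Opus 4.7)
My plan is to expand $d$ in its fibre form $d(\yon)\cong\sum_{i\in d(0)}d[i]^\yon$ and push this sum through the rig homomorphism $h$. First I would verify by induction on the number of summands that in $\Rect$
\[
  \sum_k (A_k,W_k) = \left(\sum_k A_k,\;\left(\prod_k W_k^{A_k}\right)^{1/\sum_k A_k}\right),
\]
the inductive step being a short direct calculation from the definition of binary addition in \cref{definition:rect}. Applying this formula with $(A_k,W_k)=h(d[k]^\yon)=(|d[k]|,|d[k]|)$ then yields a closed form for $h(d)$ in terms of the fibre sizes.

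Reading off the first coordinate gives $A(d)=\sum_{i\in d(0)}|d[i]|$, which equals $|d(1)|$ by \cref{corollary:d0-and-d1-from-fibre-form}; this is the first identity. Raising the second coordinate to the power $A(d)=|d(1)|$ cancels the exponent $1/|d(1)|$, leaving
\[
  W(d)^{A(d)} = \prod_{i\in d(0)}|d[i]|^{|d[i]|}.
\]

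It remains to identify this product with $|\cDir_{/d(0)}(d,d)|$. Here I would reuse the calculation from the proof of \cref{lemma:morphism-of-dir-is-morphism-of-bund}: setting $e=d$ there gives $\cDir(d,d)\cong\prod_{i\in d(0)}\sum_{j\in d(0)}\Hom_{\FSet}(d[i],d[j])$, and the choice of $j$ in each factor is exactly the value of $\varphi_0$ on $i$. Restricting to morphisms with $\varphi_0=\id$ forces $j=i$ in every factor, so $\cDir_{/d(0)}(d,d)\cong\prod_{i\in d(0)}\Hom_{\FSet}(d[i],d[i])$, whose cardinality is $\prod_i|d[i]|^{|d[i]|}$.

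The only real obstacle is unwinding iterated addition in $\Rect$ carefully enough to see that the nested weighted geometric means collapse into a single weighted geometric mean over all fibres of $\pi_d$; once the closed form above is in hand, both identities fall out at once, with the standard convention $0^0=1$ handling any zero-size fibres or the degenerate case $d(0)=\emptyset$.
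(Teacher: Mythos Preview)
Your argument is correct and follows essentially the same approach as the paper: compute $h(d)$ explicitly by summing exponentials in $\Rect$, read off $A(d)=|d(1)|$, and identify $W(d)^{A(d)}$ as the product of fibrewise endomorphism counts. The only cosmetic difference is that you work with the fibre decomposition $\sum_{i\in d(0)}d[i]^\yon$ and an explicit induction for iterated $\Rect$-sums, whereas the paper uses the coefficient form $\sum_j a_j\,j^\yon$ together with \cref{lemma:h-and-scalar-multiples} to handle the multiplicities.
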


\begin{proof}
  Write $d(\yon)=a_n\cdot n^\yon+\ldots+a_1\cdot1^\yon+a_0\cdot0^\yon$.
  Using \cref{lemma:h-and-scalar-multiples}, along with \cref{definition:rect}, we see that
  \[
    \begin{aligned}
      h(d)
      &= (a_n n,n) + (a_{n-1}(n-1),n-1) + \ldots + (a_1,1)
    \\&= \left(
        \sum\nolimits_{i=0}^n a_i i,
        \left(\prod\nolimits_{i=0}^n i^{a_i i}\right)^{1/\sum_{i=0}^n a_i i}
      \right).
    \end{aligned}
  \]
  By \cref{lemma:d0-and-d1-from-sum-form}, the first component (i.e. $A(d)$) is equal to $|d(1)|$;
  by the same lemma, we can also rewrite the second component (i.e. $W(d)$) as
  \[
    W(d)
    = \left(\prod\nolimits_{i=0}^n i^{a_i i}\right)^{\frac{1}{A(d)}}
  \]
  so it simply remains to justify why this is equal to $\left\vert\cDir_{/d(0)}(d,d)\right\vert^{\frac{1}{A(d)}}$.
  But a morphism in $\cDir_{/d(0)}(d,d)$ is exactly the data of an endomorphism of each fibre of $\pi_d\colon d(1)\to d(0)$;
  since there are $a_i$ fibres of size~$i$, endomorphisms of these fibres are in bijection with the $a_i$-fold product of $i^i$, which is equal to $i^{a_i i}$, whence the claim.
\end{proof}

\begin{corollary}
\label{corollary:width-is-algebraic}
  Let $d\in\Dir$.
  Then the width $W(d)$ is an algebraic number, i.e. the image of $h\colon\Dir\to\Rect$ lies in the sub-rig whose underlying set is $\mathbb{N}\times\overline{\mathbb{Q}}_{\geq0}$, where $\overline{\mathbb{Q}}$ is the algebraic closure of $\mathbb{Q}$.
\end{corollary}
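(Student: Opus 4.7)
The plan is to deduce this as an immediate consequence of \cref{corollary:area-and-width}. That corollary already establishes that $A(d) = |d(1)| \in \mathbb{N}$, so the first component of $h(d)$ automatically lands in $\mathbb{N}$. For the second component, I would observe that \cref{corollary:area-and-width} also gives $W(d)^{A(d)} = |\cDir_{/d(0)}(d,d)|$, which is a non-negative integer. Hence $W(d) \in \mathbb{R}_{\geq 0}$ is a root of the monic polynomial $x^{A(d)} - |\cDir_{/d(0)}(d,d)| \in \mathbb{Z}[x]$, and is therefore algebraic.

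To complete the statement about the sub-rig, I would verify (either explicitly or as a quick aside) that $\mathbb{N} \times \overline{\mathbb{Q}}_{\geq 0}$ is in fact closed under the rig operations of $\Rect$: for multiplication this is immediate since $\mathbb{N}$ is closed under multiplication and $\overline{\mathbb{Q}}$ is a field; for addition, $(A_1+A_2) \in \mathbb{N}$ is clear, while $\bigl(W_1^{A_1}W_2^{A_2}\bigr)^{1/(A_1+A_2)}$ remains algebraic because $\overline{\mathbb{Q}}$ is closed under integer powers, products, and taking $n$-th roots for $n \in \mathbb{N}_{\geq 1}$ (with the edge case $A_1+A_2 = 0$, which forces $A_1 = A_2 = 0$, handled separately — the sum then reduces to the additive unit $(0,0)$ by convention, or one interprets the empty geometric mean as $1$). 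Since the units $(0,0)$ and $(1,1)$ clearly lie in $\mathbb{N}\times\overline{\mathbb{Q}}_{\geq 0}$, this really is a sub-rig.

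There is no substantive obstacle here: the content of the corollary is entirely packaged inside \cref{corollary:area-and-width}, and everything else is a short verification. The only minor care needed is the edge case $A(d) = 0$ (when $d(1) = \emptyset$), where $W(d)$ is determined by a degenerate geometric mean; one should confirm that the convention chosen in \cref{definition:rect} still yields an algebraic (in fact rational) value, so that the statement holds uniformly.
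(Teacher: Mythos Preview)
Your proposal is correct and follows essentially the same approach as the paper: the paper's proof simply invokes \cref{corollary:area-and-width} to observe that both $A(d)$ and $W(d)^{A(d)}$ are cardinalities of sets, hence integers, which forces $W(d)$ to be algebraic. Your additional verification that $\mathbb{N}\times\overline{\mathbb{Q}}_{\geq0}$ is closed under the $\Rect$ operations and your handling of the $A(d)=0$ edge case go beyond what the paper spells out, but the core argument is identical.
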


\begin{proof}
  By \cref{corollary:area-and-width}, both $W(d)^{A(d)}$ and $A(d)$ are equal to the cardinality of some sets, and thus integer.
\end{proof}

\begin{example}
\label{example:rectangles-are-rectangles}
  Reassuringly, if we start with a ``rectangle'', then the area and width are exactly what we might expect.
  More concretely:
  consider $d(\yon)=a\cdot n^\yon$ for some $a,n\in\mathbb{N}$;
  then, by \cref{corollary:area-and-width},
  \[
    A(d)
    = d(1)
    = an,
  \]
  and, by direct calculation,
  \[
    W(d)
    = n.
  \]
  Comparing this to the picture of $a\cdot n^\yon$, we can explain why we chose the terminology ``width'' and ``area'':
  \[
    \begin{tikzpicture}[scale=0.5]
      \node at (-2,-0.9) {$a\cdot n^\yon =$};
      \begin{scope}
        \draw[rounded corners] (0,0) rectangle ++(6,5);
        \foreach \x in {1,2,3,5} {
          \draw[thick,black,fill=gray] (\x,1) circle (2mm);
          \draw[thick,black,fill=gray] (\x,2) circle (2mm);
          \node at (\x,3.25) {$\vdots$};
          \draw[thick,black,fill=gray] (\x,4) circle (2mm);
        }
        \node at (4,1) {$\ldots$};
        \draw[thick,decorate,decoration={brace,raise=-0.35cm}] (7,4.5) to node {$n$} (7,0.5);
      \end{scope}
      \draw[thick,-Latex] (3,-0.5) to (3,-1.5);
      \begin{scope}[shift={(0,-4)}]
        \draw[rounded corners] (0,0) rectangle ++(6,2);
        \foreach \x in {1,2,3,5}
        \draw[thick,black,fill=white] (\x,1) circle (2mm);
        \node at (4,1) {$\ldots$};
        \draw[thick,decorate,decoration={brace,raise=-0.35cm}] (5.5,-1) to node {$a$} (0.5,-1);
      \end{scope}
    \end{tikzpicture}
  \]
  Indeed, the area is exactly the number of dots in the (upper) rectangle, and the width is its width.

  But this picture now leads us to consider the question of whether or not there is a good meaning we can give to the ``length'' of this rectangle (which, here, should be equal to $a$).
  Indeed, this has been our motivation all along; we will return to this question in \cref{example:rectangles-are-rectangles-cont}.
\end{example}

\begin{example}
\label{example:width-and-equal-distribution}
  The fact that $d(1)$ is ``rectangular'' in \cref{example:rectangles-are-rectangles} makes the terminology look like a numerical coincidence, but we can try to hone our intuition of what this really ``means'' by considering another example.

  Let's consider $d(\yon)=4^\yon+4$, which has area $A(d)=d(1)=8$.
  We can calculate its width by using the fact that
  \[
    \begin{aligned}
      h(4^\yon)
      &= (4,4)
    \\h(4)
      &= h(1^\yon) + h(1^\yon) + h(1^\yon) + h(1^\yon)
    \\&= (4,1)
    \end{aligned}
  \]
  whence
  \[
    \begin{aligned}
      h(d)
      &= (4,4) + (4,1)
    \\&= \left(8,(4^4 1^4)^{\frac{1}{8}}\right)
    \\&= (8,2)
    \end{aligned}
  \]
  and so $W(d) = 2$.

  \medskip

  How, then, does the rectangle with area $8$ and width $2$ relate to our Dirichlet polynomial $d(\yon)=4^\yon+4$?
  That is, what is the process that takes us from $d$ to $4\cdot2^\yon$?
  Looking at the pictures of the bundles, we see that the width tells us how our bundle would look if we had the same set ($d(1)$) of draws, but with different outcomes, now all \emph{equally likely}:
  \[
    \begin{tikzpicture}[scale=0.5]
      \begin{scope}
        \begin{scope}
          \draw[rounded corners] (0,0) rectangle ++(6,5);
          \foreach \y in {1,2,3,4}
            \draw[thick,black,fill=gray] (1,\y) circle (2mm);
          \foreach \x in {2,3,4,5}
            \draw[thick,black,fill=gray] (\x,1) circle (2mm);
        \end{scope}
        \draw[thick,-Latex] (3,-0.5) to (3,-1.5);
        \begin{scope}[shift={(0,-4)}]
          \draw[rounded corners] (0,0) rectangle ++(6,2);
          \foreach \x in {1,2,3,4,5}
            \draw[thick,black,fill=white] (\x,1) circle (2mm);
        \end{scope}
      \end{scope}
      \draw[->,thick,decorate,decoration={zigzag,amplitude=1.5,post length=5pt,post=lineto}] (6.5,2.5) to (9.5,1.75);
      \begin{scope}[shift={(10,0)}]
        \begin{scope}
          \draw[rounded corners] (0,0) rectangle ++(5,3);
          \foreach \x in {1,2,3,4}
            \foreach \y in {1,2}
              \draw[thick,black,fill=gray] (\x,\y) circle (2mm);
        \end{scope}
        \draw[thick,-Latex] (2.5,-0.5) to (2.5,-1.5);
        \begin{scope}[shift={(0,-4)}]
          \draw[rounded corners] (0,0) rectangle ++(5,2);
          \foreach \x in {1,2,3,4}
            \draw[thick,black,fill=white] (\x,1) circle (2mm);
        \end{scope}
      \end{scope}
    \end{tikzpicture}
  \]
  Note that, in order to have equally sized fibres, we needed to have $4$ outcomes, not $5$ (since $8/2=4$).
  We make this idea more precise (as well as explain why the rectangle is of size $4\times2$ instead of $2\times4$) in \cref{section:understanding-the-numbers}.
\end{example}

\begin{example}
\label{example:4-and-4-and-4-and-3}
  We have just seen that $d(\yon)=4^\yon+4$ has $W(d)=2$ and $A(d)=8$, but now let's look at an example where the numbers don't divide so neatly.
  
  Let $d(\yon)=4^\yon+3$.
  Then $A(d)=d(1)=7$, and, as in \cref{example:width-and-equal-distribution}, we use the fact that
  \[
    \begin{aligned}
      h(d)
      &= (4,4) + (3,1)
    \\&= \left(7,(4^4 1^3)^{\frac{1}{7}}\right)
    \\&= (7,2\sqrt[7]{2})
    \\&\approx (7,2.21)
    \end{aligned}
  \]
  Of course, now we can't draw a nice rectangle representing the evenly distributed bundle as we did in \cref{example:width-and-equal-distribution} for $4^\yon+4$, since we would have to have an outcome set of size $7/2.21\approx3.17$ elements, with fibres all of size $2.21$, but this should come as no surprise, since $7$ is prime.
  One might be tempted to solve this problem using groupoid cardinality (cf. \cite{BHW2009}), but there are some technical issues here.
\end{example}

\section{Length}
\label{section:length}

\emph{N.B. We write $\log$ to mean $\log_2$.}

\begin{definition}
  Given a Dirichlet polynomial $d(\yon)\coloneqq\sum_{i\in d(0)}d[i]^\yon$, we define its \emph{entropy} $H(d)$ by
  \[
    H(d) \coloneqq -\sum_{i\in d(0)} \frac{|d[i]|}{|d(1)|} \log\left(\frac{|d[i]|}{|d(1)|}\right).
  \]
  We then define its \emph{length} $L(d)$ by
  \[
    L(d)\coloneqq 2^{H(d)}.\qedhere
  \]
\end{definition}

Readers might recognise $H(d)$ as being the \emph{Shannon entropy} of the corresponding probability distribution (cf. \cite{S1948}).

\begin{example}
  Consider $d(\yon)=n^\yon$ for some $n\in\mathbb{N}$.

  \[
    \begin{tikzpicture}[scale=0.5]
      \begin{scope}
        \draw[rounded corners] (0,0) rectangle ++(2,5);
        \draw[thick,black,fill=gray] (1,1) circle (2mm);
        \draw[thick,black,fill=gray] (1,2) circle (2mm);
        \node at (1,3.25) {$\vdots$};
        \draw[thick,black,fill=gray] (1,4) circle (2mm);
        \draw[thick,decorate,decoration={brace,raise=-0.35cm}] (3,4.5) to node {$n$} (3,0.5);
      \end{scope}
      \draw[thick,-Latex] (1,-0.5) to (1,-1.5);
      \begin{scope}[shift={(0,-4)}]
        \draw[rounded corners] (0,0) rectangle ++(2,2);
        \draw[thick,black,fill=white] (1,1) circle (2mm);
      \end{scope}
    \end{tikzpicture}
  \]

  Then $d(0)=1$ and $d(1)=n$, and so
  \[
    \begin{aligned}
      H(d)
      &= -\sum_{i\in\underline{1}} \frac{n}{n} \log\left(\frac{n}{n}\right)
    \\&= -\log 1
    \\&= 0
    \end{aligned}
  \]
  whence $L(d) = 2^0 = 1$.

  In terms of distributions, this corresponds to the fact that the unique probability distribution on a single outcome has entropy equal to $0$ (and so the same is true for any empirical distribution on a single outcome).
\end{example}

\begin{example}
\label{example:rectangles-are-rectangles-cont}
  Continuing on from \cref{example:rectangles-are-rectangles}, we can calculate the entropy of a uniform distribution on $a$ many outcomes $d(\yon)=a\cdot n^\yon$ as
  \[
    \begin{aligned}
      H(a\cdot n^\yon)
      &= -\sum_{i\in\underline{a}} \frac{n}{an}\log\left(\frac{n}{an}\right)
    \\&= -\log\left(\frac{1}{a}\right)
    \\&= \log a
    \end{aligned}
  \]
  whence $L(a\cdot n^\yon) = 2^{\log a} = a$, exactly as desired.
\end{example}

\begin{example}
  Continuing on from \cref{example:4-and-4-and-4-and-3}, recall that $d(\yon)=4^\yon+4$ has area $A(d)=8$ and width $W(d)=2$.
  We can further calculate that
  \[
    \begin{aligned}
      H(d)
      &= -\sum_{i\in\underline{5}}\frac{|d[i]|}{8}\log\left(\frac{|d[i]|}{8}\right)
    \\&= -\frac{4}{8}\log\left(\frac{4}{8}\right) - 4\cdot\frac{1}{8}\log\left(\frac{1}{8}\right)
    \\&= -\frac{1}{2}\log\left(\frac{1}{16}\right)
    \\&= 2
    \end{aligned}
  \]
  whence $L(d) = 2^2 = 4$.

  \medskip

  As for $d(\yon)=4^\yon+3$, recall that its area is $A(d)=7$ and its width is $2\sqrt[7]{2}$.
  Now, its entropy is
  \[
    \begin{aligned}
      H(d)
      &= -\sum_{i\in\underline{4}}\frac{|d[i]|}{7}\log\left(\frac{|d[i]|}{7}\right)
    \\&= -\frac{4}{7}\log\left(\frac{4}{7}\right) - 3\cdot\frac{1}{7}\log\left(\frac{1}{7}\right)
    \\&= \frac{\log7}{\log2} - \frac{8}{7}
    \end{aligned}
  \]
  and so its length is
  \[
    L(d) = 2^{\frac{\log7}{\log2} - \frac{8}{7}} = \frac{7}{2\sqrt[7]{2}}
  \]
  which (maybe surprisingly) is still such that $A(d)=L(d)W(d)$, even though we have non-integer values for both $L(d)$ and $W(d)$.
\end{example}

We now come to our main theorem.
It says that the Shannon entropy, which is only homomorphic in products of distributions, can be computed in terms of the width and area, which together are homomorphic in both sums and products of distributions.
We will explain this in more detail in \cref{section:understanding-the-numbers}.

\begin{theorem}
\label{theorem:rectangle-area-formula}
  For all $d\in\Dir$, we have the \emph{rectangle-area formula}
  \[
    A(d) = L(d)W(d).
  \]
\end{theorem}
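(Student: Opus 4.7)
The plan is to take $\log_2$ of both sides and prove the equivalent identity $\log A(d) - \log W(d) = H(d)$, from which the theorem follows by exponentiation. Since $L(d)=2^{H(d)}$, this is exactly the rectangle-area formula after division by $W(d)$.

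By \cref{corollary:area-and-width}, $A(d)=|d(1)|$, and the same corollary gives
\[
  \log W(d) \;=\; \frac{1}{|d(1)|}\sum_{i\in d(0)} |d[i]|\log|d[i]|.
\]
So I would reduce the theorem to verifying the identity
\[
  H(d) \;=\; \log|d(1)| \;-\; \frac{1}{|d(1)|}\sum_{i\in d(0)} |d[i]|\log|d[i]|.
\]

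To verify this, I would simply expand the defining sum of $H(d)$, using $\log(|d[i]|/|d(1)|)=\log|d[i]|-\log|d(1)|$, to split it as a difference of two weighted sums. The term involving $\log|d(1)|$ carries the coefficient $\sum_{i\in d(0)}|d[i]|/|d(1)|$, which equals $1$ by \cref{corollary:d0-and-d1-from-fibre-form}, and so it collapses to exactly $\log|d(1)|$. The remaining term matches the right-hand side above on the nose.

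There is no real obstacle: once \cref{corollary:area-and-width} is in hand, the proof is essentially the observation that Shannon entropy decomposes as ``log of total size minus weighted average of logs of fibre sizes'', and $W(d)$ was engineered (via the weighted geometric mean in the additive structure of $\Rect$) to be precisely the exponential of that weighted average. The content of the theorem is really in the earlier identification of $W(d)^{A(d)}$ as $\prod_{i\in d(0)}|d[i]|^{|d[i]|}$; after that, the rectangle-area formula is a short log-bookkeeping argument.
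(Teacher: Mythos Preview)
Your proof is correct and is essentially the same argument as the paper's: both reduce to showing $L(d)=A(d)/W(d)$ via the identification $W(d)^{A(d)}=\prod_{i\in d(0)}|d[i]|^{|d[i]|}$ (from the proof of \cref{corollary:area-and-width}) together with \cref{corollary:d0-and-d1-from-fibre-form}. The only difference is cosmetic---you take $\log_2$ first and work additively, whereas the paper stays multiplicative and manipulates $2^{H(d)}$ directly as a product; the steps and cited lemmas line up one-to-one.
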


\begin{proof}
  Write
  \[
    d(\yon) \coloneqq \sum_{j=0}^n a_j j^\yon \cong \sum_{i\in d(0)}d[i]^\yon.
  \]
  We can rewrite the length as
  \[
    \begin{aligned}
      L(d)
      &= 2^{H(d)}
    \\&= 2^{-\sum_{i\in d(0)}\frac{|d[i]|}{|d(1)|}\log\left(\frac{|d[i]|}{|d(1)|}\right)}
    \\&= \prod_{i\in d(0)} 2^{-\frac{|d[i]|}{|d(1)|}\log\left(\frac{|d[i]|}{|d(1)|}\right)}
    \\&= \prod_{i\in d(0)} \left(2^{-\log\left(\frac{|d[i]|}{|d(1)|}\right)}\right)^{\frac{|d[i]|}{|d(1)|}}
    \\&= \prod_{i\in d(0)} \frac{|d(1)|}{|d[i]|}^{\frac{|d[i]|}{|d(1)|}}
    \\&= \frac{\prod_{i\in d(0)} |d(1)|^{\frac{|d[i]|}{|d(1)|}}}{\prod_{i\in d(0)} |d[i]|^{\frac{|d[i]|}{|d(1)|}}}
    \end{aligned}
  \]
  The numerator is then
  \[
    \begin{aligned}
      \prod\nolimits_{i\in d(0)} |d(1)|^{\frac{|d[i]|}{|d(1)|}}
      &= |d(1)|^{\sum_{i\in d(0)}\frac{|d[i]|}{|d(1)|}}
    \\&= |d(1)|
    \\&= A(d)
    \end{aligned}
  \]
  since $\sum_{i\in d(0)}|d[i]|=|d(1)|$, by \cref{corollary:d0-and-d1-from-fibre-form}, and we can then apply \cref{corollary:area-and-width}.
  The denominator is exactly
  \[
    \left(\prod\nolimits_{i\in d(0)} |d[i]|^{|d[i]|}\right)^{\frac{1}{|d(1)|}}
  \]
  and so, by \cref{corollary:area-and-width}, we only need to justify why $\prod_{i\in d(0)}|d[i]|^{|d[i]|}$ is equal to $|\cDir_{/d(0)}(d,d)|$.
  But this follows from the definition of an element of the latter set: a choice of map $d[i]\to d[i]$ for all $i\in d(0)$.
\end{proof}

\section{Interpreting area, length, and width}
\label{section:understanding-the-numbers}

\begin{remark}
\label{remark:why-dir-instead-of-bun}
  We have mentioned many times that Dirichlet polynomials are equivalent to set-theoretic bundles, so the natural question to ask is ``\emph{why, then, should we work with the former instead of the latter?}''.
  One answer to this is question is the fact that \emph{entropy does not respect bundle morphisms}\footnote{That is, we cannot functorially assign a morphism between entropies to morphisms, since we are working with \textbf{arbitrary} morphisms of bundles. If, however, we restrict to only morphisms given by pushforward, then \cite{BFL2011} tells us (via \emph{Faddeev's theorem}) that the only possible functorial definition of entropy is given by the \emph{relative entropy}, i.e. the difference of the entropies of the source and the target.}, and so it seems rather bad to work with a \emph{category} (such as that of bundles) instead of simply a \emph{rig} (such as that of Dirichlet polynomials).
  Of course, this isn't an entirely satisfactory answer, since we \emph{do} care about the notion of morphisms for Dirichlet polynomials (for example, \cref{corollary:area-and-width} tells us that the width can be expressed in terms of the number of certain morphisms).
  In light of \cref{theorem:rectangle-area-formula}, however, we might consider the following possibility:
  both area and length can be expressed in terms of $d(0)$, $d(1)$, and $d[i]$ (for $i\in d(0)$), and we could \emph{define} the width by $W(d)\coloneqq A(d)/L(d)$.

  A better answer to this question might be the following:
  the rig homomorphism $h\colon\Dir\to\Rect$ is incredibly simple, since it just maps $n^\yon$ to $(n,n)$;
  from this computationally simple homomorphism, however, we can recover entropy (as $\log(A(d)/W(d))$), without making any reference to the classical equation that defines it (``negative the sum of probabilities of the log of the probabilities''), but instead relying on the fact that $\Rect$ encodes the weighted geometric mean.

  That is, $H(d)$ is only homomorphic in the product of distributions, whereas $(A(d),W(d))$ is homomorphic in both the product and the sum.
\end{remark}

\begin{remark}
  The entropy $H(d)=\log L(d)$ can be understood (via Huffman coding, cf. \cite{H1952}) as \emph{the average number of bits needed to code a single outcome} (over a long enough message).
  What is also true, however, is that the width (which is obtained purely ``algebraically'', i.e. from the rig homomorphism $h\colon\Dir\to\Rect$) gives similar information: by \cref{theorem:rectangle-area-formula}, combined with the previous sentence, $\log W(d)$ is \emph{the average number of bits needed to code the draw, given an outcome} (in the same Huffman coding as before).
  This answers the question of ``\emph{what is special about the bundle defined by the width and length}'' with ``\emph{it describes the optimal encoding of draws, given outcomes}''.

  As for the picture in \cref{example:width-and-equal-distribution}, we can now understand the hand-wavy explanation a bit better (but still just as hand-wavy-ly):
  we take our original ``half-filled'' rectangle $d(1)$ and pour its contents into a new rectangle, of length $L(d)$, and then ``slosh the contents around'' until they lie flat, and then put a lid on it;
  the rectangle will be perfectly filled up, and the placement of the lid will be given by $W(d)$.
\end{remark}

\begin{remark}
  We mentioned, in \cref{corollary:width-is-algebraic}, that the width $W(d)$ of any Dirichlet polynomial $d$ is an algebraic number, but the actual result is slightly more interesting that this:
  \cref{corollary:area-and-width} tells us that $W(d)^{A(d)}$ is equal to the cardinality of the set $\cDir_{/d(0)}(d,d)$.
  We already know how to understand endomorphisms of $d$ that fix $d(0)$ as endomorphisms of $d(1)$ that fix the outcome;
  we can understand $W(d)^{A(d)}$ as maps from $A(d)$ to $W(d)$;
  roughly speaking, such a map $f\colon A(d)\to W(d)$ determines the remaining ambiguity in determining a draw, given its outcome.
\end{remark}

\section{Cross entropy}
\label{section:cross-entropy}

Everything above can be viewed as a specific example of the analogous \emph{cross} notions.
That is, given two Dirichlet polynomials, we can define their cross area, cross width, etc. as follows.

\begin{definition}
\label{definition:cross-notions}
  Let $d,e\in\Dir$ be Dirichlet polynomials such that $d(0)=e(0)$.
  Then we define the \emph{cross entropy} $H(d,e)$ by
  \[
    H(d,e) = -\sum_{i\in d(0)}\frac{|d[i]|}{|d(1)|}\log\left(\frac{|e[i]|}{|e(1)|}\right)
  \]
  and the \emph{cross area}, \emph{cross width}, and \emph{cross length} by
  \[
    \begin{aligned}
      A(d,e) &\coloneqq |e(1)|
    \\W(d,e) &\coloneqq \left\vert\cDir_{/d(0)}(d,e)\right\vert^{\frac{1}{|d(1)|}}
    \\L(d,e) &\coloneqq 2^{H(d,e)}
    \end{aligned}
  \]
  (respectively).
\end{definition}

By definition, $X(d,d)=X(d)$ for $X\in\{A,W,H,L\}$.
That is, just as cross entropy is a generalisation of entropy, the notions of cross width etc. generalise the notions of width etc.

\begin{remark}
  Note that we can recover the notion of \emph{relative entropy} (also known as \emph{Kullback--Leibler divergence}) $D_{\mathrm{KL}}(p\|q)$, as studied in \cite{BF2014}, from cross entropy:
  \[
    H(d,e) = H(d) + D_{\mathrm{KL}}(p\|q)
  \]
  (which can also be seen to justify the fact that $H(d,d)=H(d)$).
\end{remark}

\begin{remark}
  Although we have some idea of how to understand these cross notions (e.g., cross area can be understood as the number of ``actual'' draws, when we think of $d$ as being a potentially inaccurate model for $e$), the choice of definitions in \cref{definition:cross-notions} was chosen simply so that
  \begin{enumerate}
    \item we recover the ``uncrossed'' notions when we take $d=e$, and
    \item \cref{theorem:cross-rectangle-area-formula} holds.\qedhere
  \end{enumerate}
\end{remark}

\begin{theorem}
\label{theorem:cross-rectangle-area-formula}
  For all $d,e\in\Dir$, we have the \emph{cross rectangle-area formula}
  \[
    A(d,e) = L(d,e)W(d,e).
  \]
\end{theorem}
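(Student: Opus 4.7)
The plan is to mimic the proof of \cref{theorem:rectangle-area-formula}, generalising each step from $d$ to the pair $(d,e)$. First I would write $d(\yon)\cong\sum_{i\in d(0)}d[i]^\yon$ and $e(\yon)\cong\sum_{i\in e(0)}e[i]^\yon$ (using the shared indexing set $d(0)=e(0)$), unpack the definition of $L(d,e)=2^{H(d,e)}$, and convert the sum in the exponent into a product, obtaining
\[
  L(d,e) = \prod_{i\in d(0)} \left(\frac{|e(1)|}{|e[i]|}\right)^{\!|d[i]|/|d(1)|}
  = \frac{\prod_{i\in d(0)}|e(1)|^{|d[i]|/|d(1)|}}{\prod_{i\in d(0)}|e[i]|^{|d[i]|/|d(1)|}}.
\]

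For the numerator I would pull the common base $|e(1)|$ outside and apply \cref{corollary:d0-and-d1-from-fibre-form} to get $\sum_{i\in d(0)}|d[i]|=|d(1)|$, so the exponent collapses to $1$ and the numerator becomes $|e(1)|=A(d,e)$ by definition. The denominator rearranges to $\bigl(\prod_{i\in d(0)}|e[i]|^{|d[i]|}\bigr)^{1/|d(1)|}$, so it remains only to identify $\prod_{i\in d(0)}|e[i]|^{|d[i]|}$ with $|\cDir_{/d(0)}(d,e)|$, which would then give $W(d,e)$ by the defining formula in \cref{definition:cross-notions}.

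The one genuinely new combinatorial step is this last identification, which I would handle using \cref{lemma:morphism-of-dir-is-morphism-of-bund}: an element of $\cDir_{/d(0)}(d,e)$ is a bundle morphism $(\id,\varphi_1)$, i.e.\ a choice, for each $i\in d(0)$, of a function $d[i]\to e[i]$ (because requiring $\varphi_0=\id$ forces the fibre over $i$ to be sent into the fibre over $i$). The number of such tuples is exactly $\prod_{i\in d(0)}|e[i]|^{|d[i]|}$, which finishes the argument. I do not expect any real obstacle here; the proof is essentially the same algebraic manipulation as in \cref{theorem:rectangle-area-formula}, with $|d[i]|$ in the exponent and $|e[i]|$ in the base playing distinct roles, which is precisely what makes the cross notions asymmetric.
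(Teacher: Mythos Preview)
Your proposal is correct and follows exactly the approach the paper intends: the paper's own proof of \cref{theorem:cross-rectangle-area-formula} merely says ``this proof follows exactly the same argument as the proof of \cref{theorem:rectangle-area-formula}'', and what you have written is precisely that argument carried out in detail, including the one new ingredient (identifying $\prod_{i\in d(0)}|e[i]|^{|d[i]|}$ with $|\cDir_{/d(0)}(d,e)|$ via \cref{lemma:morphism-of-dir-is-morphism-of-bund}).
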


\begin{proof}
  This proof follows exactly the same argument as the proof of \cref{theorem:rectangle-area-formula}.
\end{proof}


\printbibliography[heading=bibintoc,title=Bibliography]

\end{document}